\newtheorem{theorem}{Theorem}
\newenvironment{proof}[1][Proof]{\textbf{#1.} }{\ \rule{0.5em}{0.5em}}
\begin{document}
\title{\Large \textbf{
Hamiltonian aspects of the kinetic equation for soliton gas}}
\author{ Pierandrea Vergallo$^{1, 2}$, \quad Evgeny V.  Ferapontov$^3$}
\date{
\small $^1$ Department of Mathematics 'Federigo Enriques',  \small University of Milano, Via C. Saldini, 50,  Milano,  20133, Italy\\
\small $^2$  Istituto Nazionale di Fisica Nucleare, Sez.\ Milano\\
\small $^3$ Department of Mathematical Sciences, 
\small Loughborough University, Loughborough, \small Leicestershire LE11 3TU, United Kingdom\\
 }

\maketitle

\begin{abstract}
We investigate Hamiltonian aspects of the integro-differential kinetic equation for dense soliton gas which results as a
 thermodynamic limit of the Whitham equations.
Under a delta-functional  ansatz, the kinetic equation  reduces to a non-diagonalisable  system of hydrodynamic type whose matrix consists of several  $2\times 2$ Jordan blocks. 
We demonstrate that the resulting system possesses local Hamiltonian structures of differential-geometric type, for all standard two-soliton interaction kernels (KdV, sinh-Gordon, hard-rod, Lieb-Liniger, DNLS, and separable cases). In the hard-rod case, we show that  the continuum limit of these structures   provides a local multi-Hamiltonian formulation of the full kinetic equation.

\bigskip

\noindent MSC:  35C05, 35K55, 35Q83, 37K10.

\bigskip

\noindent
{\bf Keywords:}  kinetic equation for dense soliton gas, delta-functional reductions, Hamiltonian formulation.

\bigskip


\end{abstract}

\newpage

\tableofcontents

\bigskip

\section{Introduction}
\label{sec:intro}

Let us begin by introducing the integro-differential kinetic equation for  dense soliton gas \cite{El,EK, EKPZ}:
\begin{equation}\label{gas}
\begin{array}{c}
f_t+(sf)_x=0,\\
\ \\
{\displaystyle s(\eta)=S(\eta)+\mathop{\int} G(\mu, \eta)f(\mu)[s(\mu)-s(\eta)]\ d\mu,}
\end{array}
\end{equation}
where $f(\eta)=f(\eta, x, t)$ is the distribution function, $s(\eta)=s(\eta, x, t)$ is the associated transport velocity, and the integration is carried over the support of $f$. Here the variable $\eta$ is a spectral parameter in the Lax pair associated with the dispersive hydrodynamics; the function $S(\eta)$ (free soliton velocity) and the  kernel $G(\mu, \eta)$ (symmetrised phase shift due to pairwise soliton collisions)  are independent of $x$ and $t$. 
The kernel $G(\mu, \eta)$ is assumed to be symmetric: $G(\mu, \eta)=G(\eta, \mu)$. Equation (\ref{gas}) describes the evolution of a dense soliton gas and represents a broad generalisation of Zakharov's kinetic equation for rarefied soliton gas \cite{Z}. It has appeared independently in the context of generalised hydrodynamics of multi-body quantum integrable systems \cite{Do}.
 In the special case 
$$
S(\eta)=4\eta^2, \qquad G(\mu, \eta)=\frac{1}{\eta \mu} \log \bigg|{ \frac{\eta-\mu}{\eta+\mu}}\bigg|,
$$
system (\ref{gas}) was derived in \cite{El} as a thermodynamic limit of the KdV Whitham equations. 

It was demonstrated in \cite{PTE} that under a delta-functional ansatz,
\begin{equation}\label{del}
 f(\eta, x, t)=\sum_{i=1}^{n}u^i(x, t)\ \delta(\eta-\eta^i(x, t)),
\end{equation}
system (\ref{gas}) reduces to a $2n\times 2n$ quasilinear system for $u^i(x, t)$ and $\eta^i(x, t)$,
\begin{equation}\label{uv}
u^i_t=(u^iv^i)_x, \qquad \eta^i_t=v^i\eta^i_x,
\end{equation} 
where $v^i\equiv -s(\eta^i, x, t)$ can be recovered from the linear system
\begin{equation}\label{vlin}
v^i=-S(\eta^i)+\sum_{k\ne i}\epsilon^{ki}u^k(v^k-v^i), \qquad  \epsilon^{ki}={G(\eta^k, \eta^i)}, \ k\ne i.
\end{equation}
Ansatz (\ref{del}) with $\eta^i(x, t)=const$ was discussed  in \cite{EKPZ}. In this case, the last $n$ equations (\ref{uv}) are satisfied identically, while the first $n$ equations constitute an integrable diagonalisable linearly degenerate system whose Hamiltonian aspects (both local and nonlocal) were explored in \cite{DZP, Bul}. The case of non-constant $\eta^i(x, t)$ was investigated recently  in \cite{FP}:  the matrix of the corresponding system (\ref{uv}) is reducible to $n$ Jordan blocks of size $2\times 2$, furthermore, it was shown that the system is  linearly degenerate  and integrable by a suitable extension of the generalised hodograph method of \cite{Tsarev, Tsarev1}. 
Following \cite{PTE}, let us introduce the new  variables $r^i$ by the formula
$$
r^i=-\frac{1}{u^i}\left(1+\sum_{k\ne i}\epsilon^{ki}u^k \right).
$$
In the dependent variables $r^i, \eta^i$, system (\ref{uv}) reduces to block-diagonal form,
\begin{eqnarray}
\label{J1}
&&
\begin{array}{l} 
r^i_t=v^i r^i_x+p^i\eta^i_x,  \\
\eta^i_t=v^i\eta^i_x,  
\end{array}
\end{eqnarray}
$i=1, \dots, n$, which consists of $n$ Jordan blocks of size $2\times 2$. Here the coefficients $v^i$ and $p^i$ can be expressed in terms of $(r, \eta)-$variables as follows. Let us introduce the $n\times n$ matrix $\hat \epsilon$ with diagonal entries $r^1, \dots, r^n$ (so that $\epsilon^{ii}=r^i$) and off-diagonal entries $\epsilon^{ki}={G(\eta^k, \eta^i)}, \ k\ne i$. 
 Note that this matrix is symmetric due to the symmetry of the kernel $G$. 
 Define another symmetric matrix $\hat \beta=-\hat \epsilon^{-1}$. Explicitly, for $n=2$ we have
 $$
 \hat \epsilon=\left(
 \begin{array}{cc}
 r^1&\epsilon^{12}\\
 \epsilon^{12}&r^2
 \end{array}
 \right), \qquad
  \hat \beta=\frac{1}{r^1r^2-(\epsilon^{12})^2}\left(
 \begin{array}{cc}
 -r^2&\epsilon^{12}\\
 \epsilon^{12}&-r^1
 \end{array}
 \right).
 $$
Denote $\beta_{ki}$ the matrix elements of $\hat \beta$ (indices $k$ and $i$ are allowed to coincide). Introducing the notation $\xi^k(\eta^k)=-S(\eta^k)$, we have  the following formulae for $u^i, v^i$ and $p^i$ \cite{PTE}:
\begin{equation}\label{vp}
u^i=\sum_{k=1}^n\beta_{ki}, \ \  v^i=\frac{1}{u^i}\sum_{k=1}^n \beta_{ki}\xi^k, \ \
p^i=\frac{1}{u^i}\left(\sum_{k=1}^n \epsilon^{ki}_{,\eta^i}(v^k-v^i)u^k+(\xi^i)' \right),
\end{equation}
where we use the notation $\epsilon^{ki}_{,\eta^i}$ to indicate partial derivative with respect to  $\eta^i$.

\medskip

\noindent{\bf Remark.} In the simplest case $n=1$, we have $\hat \epsilon =r^1, \ \hat \beta=-\frac{1}{r^1}, \ u^1=-\frac{1}{r^1}, \ v^1=\xi^1, \ p^1=-(\xi^1)'u^1$, so that system (\ref{uv}) assumes Hamiltonian form
$$
\left(\begin{array}{c}
u^1_t\\
\eta^1_t
\end{array}\right)=\left(\begin{array}{cc}
0& 1\\
1&0
\end{array}\right)\frac{d}{dx}
\left(\begin{array}{c}
\partial h/\partial u^1\\
\partial h/\partial \eta^1
\end{array}\right),
$$
with the Hamiltonian density $h=\zeta^1(\eta^1)\, u^1$ where $(\zeta^1)' =\xi^1$ (in fact, this system possesses infinitely many local Hamiltonian structures).
\medskip

In Section \ref{sec:red}, we investigate Hamiltonian aspects of system (\ref{J1}) for $n\geq 2$. Our main result is that for the existence of a local Hamiltonian structure, it is necessary and sufficient that the phase shift function $G(\mu, \eta)$ has the form
\begin{equation}\label{G}
G(\mu, \eta)=\phi(\mu)\phi(\eta)g[a(\mu)-a(\eta)]
\end{equation}
where $\phi$, $g$ and $a$ are some functions of the indicated arguments (for the symmetry of $G$, the function $g$ must be even).
In Sections \ref{sec:1}, \ref{sec:2} we construct local Hamiltonian structures for delta-functional reductions of all standard cases of the soliton gas equations  presented in Table 1 below.

 \begin{center}
 \footnotesize{Table 1. Types of soliton gas equations}
 
 \vspace{2mm}
 
\begin{tabular}{|l|l|l|}
 \hline
kinetic equation & $S(\eta)$  & $G(\mu, \eta)$ \\
 \hline
KdV soliton gas\vphantom{$\frac{\frac{A}A}{\frac{A}A}$} &
$4\eta^2$ & 
$\frac{1}{\eta \mu} \log \bigg|{ \frac{\eta-\mu}{\eta+\mu}}\bigg|$\\
 \hline
sinh-Gordon soliton gas\vphantom{$\frac{\frac{A}A}{\frac{A}A}$} &
$\tanh \eta$ &  
$\frac{1}{\cosh \eta \cosh  \mu}\  \frac{a^2\cosh(\eta-\mu)}{4\sinh^2(\eta-\mu)}$\\
\hline
hard-rod gas\vphantom{$\frac{\frac{A}A}{\frac{A}A}$} &
$\eta$ &  
$-a$\\
\hline
Lieb-Liniger gas\vphantom{$\frac{\frac{A}A}{\frac{A}A}$} &
$\eta$ &  
$\frac{2a}{a^2+(\eta -\mu)^2}$\\
\hline
DNLS soliton gas\vphantom{$\frac{\frac{A}A}{\frac{A}A}$} &
$\eta$ &  
$\frac{1}{2\sqrt{\eta^2-1}\sqrt{\mu^2-1}}\log  \left( \frac{(\eta-\mu)^2-\left(\sqrt{\eta^2-1}+\sqrt{\mu^2-1}\right)^2}{(\eta-\mu)^2-\left(\sqrt{\eta^2-1}-\sqrt{\mu^2-1}\right)^2}\right)$  \\
\hline
separable case\vphantom{$\frac{\frac{A}A}{\frac{A}A}$} &
arbitrary &  
$\phi(\eta)+\phi(\mu)$\\
\hline
general case\vphantom{$\frac{\frac{A}A}{\frac{A}A}$} &
arbitrary &  
$\phi(\mu)\phi(\eta)g[a(\mu)-a(\eta)]$\\
\hline
\end{tabular}
 \end{center}
Note that all  cases of Table 1 fall into class (\ref{G}). We refer to \cite{El, EK, CER, Do, Do1, Do2, Spohn} for further discussion and references. For $n=2$, a local Hamiltonian formulation of system (\ref{J1}) was established  in \cite{VerFer23} for all cases of Table 1; here we generalise the results of \cite{VerFer23} to  arbitrary $n$. Finally, in Section \ref{sec:flat} we calculate flat coordinates (Casimirs) and momenta for the Hamiltonian structures constructed in Section \ref{sec:1}.

In Section \ref{sec:hr}, we discuss  continuum limit of the multi-Hamiltonian formalism constructed in Section \ref{sec:2} to obtain a local multi-Hamiltonian formulation of the full kinetic equation for the hard-rod gas. This is done by  rewriting the kinetic equation of the hard-rod gas as an infinite (linearisable) hydrodynamic chain.
In this connection, we refer to \cite{Gib} for Hamiltonian aspects of  integro-differential Vlasov equations, as well as to \cite{GibRai} for calculations of various differential-geometric quantities (Haantjes tensor, curvature tensor, etc) in the Vlasov picture. Note that non-local Hamiltonian formalism of the full kinetic equation for dense soliton gas was developed in \cite{Bul}.

\section{Hamiltonian formulation of delta-functional reductions}
\label{sec:red}
System (\ref{J1}) governing delta-functional reductions of the kinetic equation belongs to a general class of quasilinear systems of the form
 \begin{equation}\label{q}
{R}_{t}=A({R}){R}_{x},
\end{equation}
where ${R}=(R^1, ..., R^N)^T$ is the (column) vector of dependent variables and  $A$ is an $N\times N$ matrix. In particular, for system (\ref{J1}) we have $N=2n, \ R=(r^1,\eta^1, \dots, r^n, \eta^n)^T$, and the matrix $A$ is block-diagonal with $n$ Jordan blocks of size $2\times 2$. 
Systems (\ref{q}) whose matrix $A$ has non-trivial Jordan block structure  have appeared  in numerous applications such  as degenerations of hydrodynamic systems associated with multi-dimensional hypergeometric functions  \cite{KK}, in the context of parabolic regularisation of the Riemann equation \cite{KO2}, as  reductions of hydrodynamic chains and linearly degenerate dispersionless PDEs in 3D \cite{Pavlov1},  in the context of Nijenhuis geometry \cite{BKM}, and as primary flows of non-semisimple Frobenius manifolds \cite{LP}.  
It was shown in \cite{XF} that  {\it integrable} systems of  Jordan block type are governed by the modified KP hierarchy. The subject of this paper, delta-functional reductions of the kinetic equation, leads to particularly interesting integrable examples of type (\ref{q}) where the matrix $A$ consists of several $2\times 2$ Jordan blocks, see  \cite{PTE, FP, VerFer23}.

After introducing Hamiltonian formalism of differential-geometric type, we establish Hamiltonian formulation of delta-functional reductions for all standard two-soliton interaction kernels $G(\mu, \eta)$ presented in Table 1.

\subsection{Hamiltonian operators of differential-geometric type, Tsarev's conditions}

We will be looking at systems (\ref{q}) representable in Hamiltonian form,
\begin{equation*}
R^i_t=B^{ij}\frac{\delta H}{\delta R^j},
\end{equation*}
where $B^{ij}$ is a Hamiltonian operator of Dubrovin-Novikov type,
\begin{equation}\label{DN}
B^{ij}=g^{ij}(R)\partial_x+\Gamma^{ij}_k(R)R^k_x,
\end{equation}
and $H=\int h(R)\, dx$ is a Hamiltonian with density $h(R)$.
The conditions for operator (\ref{DN}) to be Hamiltonian  were obtained in \cite{DN83}. In particular, if $\det(g^{ij})\neq 0$, then $g=\{g^{ij}\}$ is a contravariant flat metric and $\Gamma^{ij}_k=-g^{is}\Gamma^{j}_{sk}$ where $\Gamma^j_{sk}$ are Christoffel symbols of the associated Levi-Civita connection. The matrix $A=(A^i_j)$ of the corresponding system (\ref{q}) is given by the formula
$$
A^i_j=\nabla^i\nabla_jh.
$$
Thus, to specify a Hamiltonian structure of  type (\ref{DN}), it is sufficient to provide the corresponding contravariant flat metric $g$.

In what follows, we will also need a nonlocal generalisation of Dubrovin-Novikov operators (\ref{DN}) of the form
\begin{equation}\label{MF}
B^{ij}=g^{ij}(R)\partial_x+\Gamma^{ij}_k(R)R^k_x+KR^i_x\partial_x^{-1}R^j_x,
\end{equation}
$K$=const \cite{FM}. Operator (\ref{MF}) is Hamiltonian  iff $g$ is a contravariant flat metric of constant curvature $K$ and $\Gamma^{ij}_k=-g^{is}\Gamma^{j}_{sk}$ where $\Gamma^j_{sk}$ are Christoffel symbols of the associated Levi-Civita connection. The matrix $A$ of the corresponding system (\ref{q}) is given by the formula
$$
A^i_j=\nabla^i\nabla_jh+Kh\, \delta^i_j.
$$

It was shown by Tsarev   \cite{Tsarev, Tsarev1} that system (\ref{q}) admits Hamiltonian formulation of type (\ref{DN}) (respectively, (\ref{MF})), 
if and only if the following conditions are satisfied:
\begin{equation}
g^{is}A^j_{s}=g^{js}A^i_s,
\label{cond1}
\end{equation}
\begin{equation}
\label{cond2} 
\nabla_iA^j_k=\nabla_kA^j_i,
\end{equation}
where $\nabla$ denotes covariant derivative in the Levi-Civita connection of the flat (respectively, constant curvature $K$) metric $g$. Note that conditions (\ref{cond1}) imply that, if $A$ has block-diagonal form with $n$ upper-triangular $2\times 2$ Toeplitz blocks as in (\ref{J1}),
\[
\left(\begin{array}{cc}
v^i & p^i\\
0&v^i\\
\end{array}\right),
\]
 then the corresponding metric $g$ (with upper indices) will also have block-diagonal form with $n$ upper-triangular $2\times 2$ Hankel blocks:
\[
\left(\begin{array}{cc}
m_i & n_i \\
n_i&0
\end{array}\right).
\]
Conditions (\ref{cond2}) give differential equations for the metric coefficients $m_i, n_i$. To write them down explicitly let us introduce the following quantities expressed in terms of $v^i, p^i$:
\begin{equation*}
\label{c1}
b_i=\frac{v^i_{\eta^i}-p^i_{r^i}}{p^i}, \quad
a_{ij}=\frac{v^i_{r^j}}{v^j-v^i},\quad
b_{ij}=\frac{v^i_{\eta^j}-a_{ij}p^j}{v^j-v^i},
\end{equation*}
\begin{equation*}
c_{ij}=\frac{p^i_{r^j}+a_{ij}p^i}{v^j-v^i}, \quad
d_{ij}=\frac{p^i_{\eta^j}+b_{ij}{p^i}-c_{ij}p^j}{v^j-v^i};
\end{equation*}
here and below we assume $i\ne j$. In terms of these quantities  (which have appeared in \cite{FP}, formulae (15-16), as invariants of commuting flows), conditions (\ref{cond2}) simplify to 
\bigskip

\noindent{\bf Equations for $n_i$:}
\begin{equation}\label{ni}
\frac{n_{i, r^i}}{n_i}=-b_i, \quad \frac{n_{i, r^j}}{n_i}=-2a_{ij}, \quad \frac{n_{i, \eta^j}}{n_i}=-2b_{ij}.
\end{equation}

\bigskip

\noindent{\bf Equations for $m_i$:}
\begin{equation}\label{mi}
\left(\frac{m_{i}}{n_i}\right)_{r^j}=-2c_{ij}, \quad \left(\frac{m_{i}}{n_i}\right)_{\eta^j}=-2d_{ij}.
\end{equation}

\medskip

\noindent 
Note   that $n_i$ are defined up to arbitrary multiplicative factors, $n_i\to n_i\, s_i(\eta_i)$, while $m_i$ are defined up to transformations of the form $m_i\to m_i+n_i\, g_i(r^i, \eta^i)$ where $s_i(\eta^i)$ and $g_i(r^i, \eta^i)$ are arbitrary functions of the indicated arguments. 
The general solution of equations (\ref{ni}), (\ref{mi}) has the form
\begin{equation}\label{nm}
n_i=\frac{s_i(\eta^i)}{(u^i)^2}, \quad m_i=-\frac{2s_i(\eta^i)}{(u^i)^3}\displaystyle \sum_{j\neq i}^n u^j\epsilon^{ji}_{,\eta^i} +\frac{g_i(r^i, \eta^i)}{(u^i)^2},
\end{equation}
where $u^i$ are defined in (\ref{vp}) and the functions  $s_i(\eta^i)$ and $g_i(r^i, \eta^i)$ are, as yet, arbitrary. Our next goal is to specify arbitrary functions $s_i(\eta^i)$ and $g_i(r^i, \eta^i)$ so that the corresponding metric $g$ is flat (or has constant curvature $K$). After that, calculation of the Hamiltonian density $h$ will require a simple quadrature.

\subsection{Flatness of the metric tensor}

The operator (\ref{DN}) will be Hamiltonian if the metric $g$ is flat, which is equivalent to the vanishing of the Riemann curvature tensor,
\begin{equation}\label{1o}
R^i_{jkl}=\Gamma^i_{jl,k}-\Gamma^i_{jk,l}+\Gamma^i_{ks}\Gamma^s_{jl}-\Gamma^i_{ls}\Gamma^s_{jk},
\end{equation}
where $\Gamma^{i}_{jk}$ are Christoffel symbols of the  Levi-Civita connection of $g$. 

\begin{theorem}\label{main} The metric specified by (\ref{nm}) is flat if and only if the functions $g_i(r^i, \eta^i)$ are quadratic in $r^i$,
$$
g_i(r^i,\eta^i)=\varphi_i(\eta^i)(r^i)^2+\chi_i(\eta^i)r^i+\psi_i(\eta^i),
$$ 
furthermore, the following conditions must be satisfied:
\begin{equation}
\epsilon^{ij}\left(\chi_i+\chi_j\right)=2\left(s_i\epsilon^{ij}_{,\eta^i}+s_j\epsilon^{ij}_{,\eta^j}+\displaystyle \sum_{k\neq i,j}\epsilon^{ik}\epsilon^{jk}\varphi_k\right), \label{eqdr}
\end{equation}
\begin{equation}
\displaystyle \sum_{k\neq i} \varphi_{k}(\epsilon^{ik})^2+\psi_i=0. \label{eqref2}
\end{equation}
\end{theorem}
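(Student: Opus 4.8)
The plan is to compute the Riemann curvature tensor (\ref{1o}) of the block-diagonal metric (\ref{nm}) directly and to read off the conditions under which it vanishes. First I would invert the contravariant metric: since $g^{ij}$ is block-diagonal with Hankel blocks $\left(\begin{smallmatrix} m_i & n_i\\ n_i & 0\end{smallmatrix}\right)$, the covariant metric $g_{ij}$ is block-diagonal with blocks $\left(\begin{smallmatrix} 0 & 1/n_i\\ 1/n_i & -m_i/n_i^2\end{smallmatrix}\right)$, and the Christoffel symbols follow from $\Gamma^i_{jk}=\tfrac12 g^{il}(g_{lj,k}+g_{lk,j}-g_{jk,l})$. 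The indispensable computational device is that all metric coefficients depend on the full set of variables only through the matrix $\hat\beta=-\hat\epsilon^{-1}$; differentiating the inverse gives the closed forms $\beta_{ij,r^k}=\beta_{ik}\beta_{kj}$ and $\beta_{ij,\eta^k}=\sum_{l\ne k}\epsilon^{kl}_{,\eta^k}(\beta_{ik}\beta_{lj}+\beta_{il}\beta_{kj})$, the latter obtained from $\partial_{\eta^k}\hat\epsilon=\sum_{l\ne k}\epsilon^{kl}_{,\eta^k}(E_{kl}+E_{lk})$. From these one gets explicit expressions such as $u^i_{,r^j}=u^j\beta_{ji}$, hence for the derivatives of $n_i$ and $m_i$, reducing every Christoffel symbol to a rational expression in the $\beta_{ij}$ and the kernel data $\epsilon^{ij},\epsilon^{ij}_{,\eta}$.

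Next I would organize the components $R^i_{jkl}$ according to the block membership of the indices. Because $g^{il}$ connects only coordinates within a single block, the raised index and the summation in (\ref{1o}) are constrained, and each Christoffel symbol splits into an intra-block part (derivatives along $r^i,\eta^i$) and inter-block parts (derivatives of $m_i,n_i$ along $r^j,\eta^j$ with $j\ne i$). The inter-block pieces are exactly governed by the quantities $b_i,a_{ij},b_{ij},c_{ij},d_{ij}$ already appearing in (\ref{ni})--(\ref{mi}), so the first-order data is under control; flatness is the statement that this connection fits together with vanishing curvature.

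I expect the vanishing to split into two groups. Isolating the part of the curvature that depends on $g_i$ — which enters only through the term $g_i/(u^i)^2$ in the block-$i$ entry $m_i$ and its $r^i$-derivatives — and using that the $r^i$-dependence inherited from $u^i$ is rational (via $u^i_{,r^i}=u^i\beta_{ii}$), the requirement that this part vanish identically in $r^i$ forces $g_i$ to be at most quadratic: $g_i=\varphi_i(\eta^i)(r^i)^2+\chi_i(\eta^i)r^i+\psi_i(\eta^i)$. Substituting this form, the remaining curvature components reduce to purely algebraic relations among $\varphi_i,\chi_i,\psi_i,s_i$ and the kernel data: the components coupling a pair of blocks $(i,j)$ collapse — after repeated use of $\sum_k\epsilon^{ik}\beta_{kj}=-\delta^i_j$ and $\beta_{ij}=\beta_{ji}$ — to the two-block relation (\ref{eqdr}), while the part independent of the $r$-variables yields (\ref{eqref2}), which expresses $\psi_i$ through the $\varphi_k$. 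Reversing the substitution shows these conditions are also sufficient for flatness.

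The main obstacle is the inter-block bookkeeping. Since $m_i$ and $n_i$ depend on every $r^j$ and $\eta^j$ through $\hat\beta$, the products $\Gamma^i_{ks}\Gamma^s_{jl}$ generate double sums ranging over three distinct blocks, and the crux is to show that all genuine three-block contributions cancel identically, leaving only the two-block equation (\ref{eqdr}) and the residual (\ref{eqref2}). This cancellation rests entirely on the inverse-matrix identity $\hat\epsilon\hat\beta=-I$ together with the symmetry of $\hat\epsilon$; keeping track of the numerous $\beta$-sums and verifying the cancellation without error is the delicate part of the argument, whereas the quadratic-in-$r^i$ reduction and the sufficiency direction are comparatively routine quadratures.
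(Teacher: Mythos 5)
Your plan is, at its core, the paper's own route: compute the curvature of the block-Hankel metric (\ref{nm}) directly and extract flatness as polynomial identities in $r^1,\dots,r^n$; your inverse-matrix calculus $\beta_{ij,r^k}=\beta_{ik}\beta_{kj}$, $\beta_{ij,\eta^k}=\sum_{l\ne k}\epsilon^{kl}_{,\eta^k}(\beta_{ik}\beta_{lj}+\beta_{il}\beta_{kj})$ is correct and is exactly equivalent to the paper's cofactor bookkeeping via $\beta_{ij}=-A_{i,j}/\det\epsilon$. The organization differs, though. For necessity the paper is far more economical than your full-tensor computation: it uses only the single component family $R^{r^i}_{r^ir^i\eta^i}$, whose numerator (\ref{der}), differentiated once in $r^i$, collapses to $2(\det\epsilon)^2 g_{i,r^ir^ir^i}=0$ (quadraticity), after which (\ref{eqref2}) is the coefficient of the top-degree monomial $\prod_{k\neq i}(r^k)^2$ and (\ref{eqdr}) that of $r^j\prod_{k\neq i,j}(r^k)^2$ --- so your attribution of (\ref{eqref2}) to the $r$-independent part is a bookkeeping slip, harmless but a sign the extraction was not carried out. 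The substantive soft spot is sufficiency: ``reversing the substitution'' only recovers the components used for necessity, and the cross-block cancellation you flag as the crux is precisely the unproven content of the converse. The paper closes it differently: differentiating (\ref{eqref2}) gives $\varphi_k(\log\epsilon^{ik})_{\eta^i\eta^k}=0$, splitting into the case $(\log G)_{\mu\eta}\neq 0$ (forcing $\varphi_i=\psi_i=0$ and the kernel form (\ref{G})) and the separable case, with flatness then checked directly in each. Your unified computation, if completed, would subsume that dichotomy and yield the equivalence in one pass; the paper's case split both discharges sufficiency more cheaply and extracts the structural conclusion (\ref{G}) that drives Sections \ref{sec:1} and \ref{sec:2}.
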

\begin{proof}
Let us begin by proving the necessity of the conditions of the theorem. We will use coordinates $(R^1, R^2,  \dots, R^{2n-1}, R^{2n})=(r^1,\eta^1, \dots, r^n, \eta^n)$. Direct calculation of the curvature component $R^{r^i}_{r^ir^i\eta^i}, \ i\in \{1, \dots, n\}$, gives
\begin{equation}\label{der}
- \frac{(\det \epsilon)^2g_{i,r^ir^i}-2\det\epsilon A_{i,i}g_{i,r^i}+\displaystyle 2\sum_{k=1}^ng_k(A_{i,k})^2{+}2\sum_{k,l=1}^nA_{i,k}A_{i,l}\left(s_l\epsilon^{lk}_{,l}+s_k\epsilon^{lk}_{,k}\right)}{2\, s_i\, (\det \epsilon)^2}
\end{equation}
where $A_{i, k}$ is the cofactor of the $n\times n$ matrix $\hat \epsilon$,  i.e.,  the determinant of the minor obtained by eliminating the $i$-th row and the $k$-th column of $\hat \epsilon$.
Indeed, for $j=k=i$ and $l=i+1$, formula (\ref{1o}) gives 
\begin{equation}\label{18}
R^{r^i}_{r^ir^i\eta^i}=\Gamma^{r^i}_{r^i\eta^i,r^i}-\Gamma^{r^i}_{r^ir^i,\eta^i}+\Gamma^{r^i}_{r^is}\Gamma^s_{r^i\eta^i}-\Gamma^{r^i}_{\eta^is}\Gamma^s_{r^ir^i}.
\end{equation}
Using the following expressions for Christoffel symbols of metrics $g$ consisting of $2\times 2$  Hankel blocks,

\begin{align}\label{christof}\begin{split}
&\Gamma^{r^i}_{r^ir^i}=g^{r^i\eta^i}g_{r^i\eta^i,r^i},\\
&\Gamma^{r^i}_{r^i\, \eta^i}=\frac{1}{2}g^{r^i\eta^i}g_{\eta^i\eta^i,r^i},\\
&\Gamma^{r^i}_{\eta^i\,\eta^i}=g^{r^ir^i}g_{r^i\eta^i,\eta^i}-\frac{1}{2}g^{r^ir^i}g_{\eta^i\eta^i,r^i},\\
&\Gamma^{\eta^i}_{\eta^i\,\eta^i}=g^{r^i\eta^i}g_{r^i\eta^i,\eta^i}-\frac{1}{2}g^{r^i\eta^i}g_{\eta^i\eta^i,r^i},\\
&\Gamma^{r^i}_{r^j\, \eta^j}=-\frac{1}{2}g^{r^ir^i}g_{r^j\eta^j,r^i}-\frac{1}{2}g^{r^i\eta^i}g_{r^j\eta^j,\eta^i},\\
&\Gamma^{r^i}_{r^j\, r^j}=-\frac{1}{2}g^{r^ir^i}g_{r^jr^j,r^i}-\frac{1}{2}g^{r^i\eta^i}g_{r^jr^j,\eta^i},\\
&\Gamma^{r^i}_{r^i\, r^j}=\frac{1}{2}g^{r^i\eta^i}g_{r^i\eta^i,r^j},\\
&\Gamma^{\eta^i}_{r^j\, \eta^j}=-\frac{1}{2}g^{r^i\eta^i}g_{r^j\eta^j,r^i},\\
&\Gamma^{\eta^i}_{\eta^j\, \eta^j}=-\frac{1}{2}g^{\eta^ir^i}g_{\eta^j\eta^j,r^i},\\
&\Gamma^{\eta^i}_{r^i\, \eta^i}=\Gamma^{\eta^i}_{r^ir^i}=\Gamma^{r^i}_{kl}=\Gamma^{\eta^i}_{kl}=0,
\end{split}\end{align}
where $i\neq j,k,l$,  the expression \eqref{18} simplifies to
\begin{equation*}
R^{r^i}_{r^ir^i\eta^i}=\Gamma^{r^i}_{r^i\eta^i,r^i}-\Gamma^{r^i}_{r^ir^i,\eta^i}+\displaystyle \sum_{s\neq r^i,\eta^i}\Gamma^{r^i}_{r^i\, s}\Gamma^s_{r^i\eta^i}-\Gamma^{r^i}_{\eta^i\, s}\Gamma^s_{r^ir^i}.
\end{equation*}
Let us  first observe that 
\begin{align*}
&g^{r^ir^i}=m_i, \qquad g^{r^i\eta^i}=n_i, \qquad g^{\eta^i\eta^i}=0,\\
&g_{r^ir^i}=0,\qquad g_{r^i\eta^i}=\frac{1}{n_i},\qquad g_{\eta^i\eta^i}=-\frac{m_i}{n_i^2}.
\end{align*}
Substituing this into \eqref{christof} gives
\begin{subequations}
\begin{align*}
&\Gamma^{r^i}_{r^i\, r^k}\Gamma^{r^k}_{r^i\eta^i}+\Gamma^{r^i}_{r^i\eta^k}\Gamma^{\eta^k}_{r^i\eta^i}=-\frac{m_kn_{i,k}^2+2n_kn_{i,k}n_{i,k+1}}{4n_i^3},
\\
&\Gamma^{r^i}_{r^ir^i,\eta^i}=-\frac{n_{i,i\,i+1}n_i-n_{i,i}n_{i,i+1}}{n_i^2},
\\
&\Gamma^{r^i}_{r^i\eta^i,r^i}=\frac{-m_{i,ii}n_i^2+3m_{i,i}n_{i,i}n_i-2m_in_in_{i,ii}-4m_in_{i,i}^2}{2n_i^3},
\end{align*}
\end{subequations}
leading to the formula
\begin{align*}\begin{split}
R^{r^i}_{r^ir^i\eta^i}&=-\frac{1}{4n_i^3}\Big[\displaystyle \sum_{k\neq i} \left(m_k(n_{i,k})^2+2n_k(n_{i,k})(n_{i,k+1})\right)+8m_in_{i,i}^2+4n_i(n_{i,i})(n_{i,i+1})\Big.\\&\hphantom{ciaociooacioaicoaio}
\Big.-6n_i(n_{i,i})(m_{i,i})+2n_i^2(m_{i,ii})-4n_i^2(n_{i,i\, i+1})-4m_in_{i}n_{i,ii}\Big].\end{split}
\end{align*}
Substituting \eqref{nm} in the above expression, equating to zero the numerator of $R^{r^i}_{r^ir^i\eta^i}$  and differentiating it with respect to  $r^i$ we obtain
\begin{equation*}
0=2\det \epsilon A_{i,i}g_{i,r^i,r^i}+2(\det \epsilon)^2g_{i,r^ir^ir^i}-2(A_{i,i})^2g_{i,r^i}+2(A_{i,i})^2g_{i,r^i}-2\det\epsilon A_{i,i}g_{i,r^ir^i}
\end{equation*}
\begin{equation*}
\hphantom{0}=2(\det \epsilon)^2 g_{i,r^ir^ir^i}.
\end{equation*}
Thus, the functions $g_i(r^i, \eta^i)$ must be quadratic in $r^i$, 
\begin{equation*}
g_i(r^i,\eta^i)=\varphi_i(\eta^i)(r^i)^2+\chi_i(\eta^i)r^i+\psi_i(\eta^i),
\end{equation*}
where $\varphi_i,\chi_i,\psi_i$ are arbitrary functions of $\eta^i$. With this form of $g_i(r^i,\eta^i)$, formula (\ref{der}) reduces to
\begin{align*}
&-\frac{1}{2s_i(\det \epsilon)^2}\Big[ 2(\det \epsilon)^2\varphi_i-2\det\epsilon A_{i,i}(2\varphi_i\, r^i+\chi_i)+\big.
\\
&\hphantom{ciaocioaciaoic}\Big.+\displaystyle 2\sum_{k=1}^n(\varphi_k(r^k)^2+\chi_kr^k+\psi_k)(A_{i,k})^2+2\sum_{k,l=1}^nA_{i,k}A_{i,l}\left(s_l\epsilon^{lk}_{,l}+s_k\epsilon^{lk}_{,k}\right)\Big]
\end{align*}
where $A_{i,j}$ is the $(i,j)$-cofactor of $\hat{\epsilon}$. This expression vanishes if and only if its numerator vanishes:  
\begin{align*} \label{exore}&2(\det \epsilon)^2\varphi_i-2\det\epsilon A_{i,i}(2\varphi_i\, r^i+\chi_i)+
\\
&\hphantom{ciaocioa}+\displaystyle 2\sum_{k=1}^n(\varphi_k(r^k)^2+\chi_kr^k+\psi_k)(A_{i,k})^2+2\sum_{k,l=1}^nA_{i,k}A_{i,l}\left(s_l\epsilon^{lk}_{,l}+s_k\epsilon^{lk}_{,k}\right)=0.
\end{align*}
We remark that this expression
is a polynomial of degree $2n$ in the variables $r^1,\dots,  r^n$, furthermore, it has  degree at most two in each variable $r^k$. We also note that  the coefficients at $r^i$ and $(r^i)^2$ vanish identically. Collecting the coefficients at $\prod_{k\neq i}(r^k)^2$, we obtain 
$$
\displaystyle 2\epsilon^{ij}\left(\sum_{k\neq i} \varphi_{k}(\epsilon^{ik})^2+\psi_i\right).
$$
Finally, collecting the coefficients at $(r^j)\prod_{k\neq i,j} (r^k)^2$ we get 
$$
\displaystyle  2 \epsilon^{ij}\left(\chi_i(\eta^i)+\chi_j(\eta^j)\right)-4\left(s_i(\eta^i)\epsilon^{ij}_{,i}+s_j(\eta^j)\epsilon^{ij}_{,j}+\displaystyle \sum_{k\neq i,j}\varphi_k(\eta^k)\epsilon^{ik}\epsilon^{jk}\right)\label{a3}.
$$
Equating these coefficients to zero gives the necessity part of Theorem \ref{main}.  

\medskip

The sufficiency of  conditions of Theorem \ref{main} can be demonstrated as follows.
Differentiating equation (\ref{eqref2}) by $\eta_k, \ k\ne i$, dividing the result by $(\epsilon^{ik})^2$ and then differentiating again by $\eta^i$, we obtain $\varphi_k(\log \epsilon^{ik})_{\eta^i\eta^k}=0$. Thus, the further analysis splits into two different cases:

\medskip
\noindent{\bf Case 1:} $(\log \epsilon^{ik})_{\eta^i\eta^k}\ne 0$, equivalently, $(\log G)_{\mu \eta}\ne 0$. In this case, $\varphi_i=\psi_i=0$ and the conditions of Theorem \ref{main} simplify to $g_i(r^i,\eta^i)=\chi_i(\eta^i)r^i,$ 
\begin{equation*}
\epsilon^{ij}\left(\chi_i+\chi_j\right)=2\left(s_i\epsilon^{ij}_{,\eta^i}+s_j\epsilon^{ij}_{,\eta^j}\right), 
\end{equation*}
which, in particular, leads to formula (\ref{G}) for the interaction kernel $G(\mu, \eta)$. 

\medskip
\noindent{\bf Case 2:} $(\log \epsilon^{ik})_{\eta^i\eta^k}= 0$, equivalently, $(\log G)_{\mu \eta}= 0$. In this case,  the interaction kernel has  (miltiplicatively) separable form, $G(\mu, \eta)=\phi(\mu)\phi(\eta)$, a particularly interesting special case being the hard-rod gas,  $G(\mu, \eta)=-a$.
 \medskip
 
In both cases, the flatness of the corresponding metric can be obtained by direct calculation. 
\end{proof}

\medskip

We will also need the following nonlocal generalisation of Theorem \ref{main}.

\begin{theorem}\label{nonloc} The metric specified by (\ref{nm}) has constant curvature $K$ (that is, {$R^i_{jkl}=K(\delta^i_kg_{jl}-\delta^i_lg_{jk}$})),  if and only if the functions $g_i(r^i, \eta^i)$ are quadratic in $r^i$,
$$
g_i(r^i,\eta^i)=\varphi_i(\eta^i)(r^i)^2+\chi_i(\eta^i)r^i+\psi_i(\eta^i),
$$ 
furthermore, the following conditions must be satisfied:
\begin{equation*}
\epsilon^{ij}\left(\chi_i+\chi_j\right){-}2K=2\left(s_i\epsilon^{ij}_{,\eta^i}+s_j\epsilon^{ij}_{,\eta^j}+\displaystyle \sum_{k\neq i,j}\epsilon^{ik}\epsilon^{jk}\varphi_k\right), \label{eqdrK}
\end{equation*}
\begin{equation*}
\displaystyle \sum_{k\neq i} \varphi_{k}(\epsilon^{ik})^2+\psi_i={-K}. \label{eqref2K}
\end{equation*}
\end{theorem}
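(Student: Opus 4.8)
The plan is to follow the proof of Theorem~\ref{main} almost verbatim, exploiting the fact that Tsarev's conditions (\ref{cond1})--(\ref{cond2}), and hence equations (\ref{ni})--(\ref{mi}) together with their general solution (\ref{nm}), are identical for the local operator (\ref{DN}) and its nonlocal counterpart (\ref{MF}): the only thing that changes is the curvature requirement on $g$, which passes from the flatness condition $R^i_{jkl}=0$ to the constant-curvature condition $R^i_{jkl}=K(\delta^i_k g_{jl}-\delta^i_l g_{jk})$. Thus the metric is still of the Hankel-block form with entries (\ref{nm}), parametrised by the same free functions $s_i(\eta^i)$ and $g_i(r^i,\eta^i)$, and the whole computation reduces to re-imposing the curvature constraint with the new right-hand side.

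For necessity I would again single out the diagnostic component $R^{r^i}_{r^ir^i\eta^i}$. It is given by the same expression (\ref{der}), but must now be set equal to $K(\delta^{r^i}_{r^i}g_{r^i\eta^i}-\delta^{r^i}_{\eta^i}g_{r^ir^i})=K\,g_{r^i\eta^i}=K/n_i$. Using $n_i=s_i/(u^i)^2$ and $u^i=-(\det\hat\epsilon)^{-1}\sum_k A_{i,k}$, the constant-curvature equation becomes the polynomial identity $N+2K(\sum_k A_{i,k})^2=0$, where $N$ is precisely the numerator that appeared in the flat case. The key observation is that every cofactor $A_{i,k}$ is obtained by deleting the $i$-th row of $\hat\epsilon$, which removes the only occurrence of the diagonal entry $\epsilon^{ii}=r^i$; hence the extra term $2K(\sum_k A_{i,k})^2$ is independent of $r^i$. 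Differentiating once in $r^i$ therefore reproduces $2(\det\hat\epsilon)^2 g_{i,r^ir^ir^i}=0$ exactly as before, so $g_i$ is again quadratic in $r^i$.

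With $g_i=\varphi_i(\eta^i)(r^i)^2+\chi_i(\eta^i)r^i+\psi_i(\eta^i)$ substituted, the identity $N+2K(\sum_k A_{i,k})^2=0$ is a polynomial in the remaining variables $r^k$, $k\ne i$, of degree at most two in each. Collecting the coefficient of the top monomial $\prod_{k\ne i}(r^k)^2$ now yields (\ref{eqref2}) with the right-hand side shifted by the leading contribution $2K(A_{i,i})^2$, giving $\sum_{k\ne i}\varphi_k(\epsilon^{ik})^2+\psi_i=-K$; collecting the coefficient of $r^j\prod_{k\ne i,j}(r^k)^2$ yields (\ref{eqdr}) with the analogous shift coming from the cross term $2K\cdot 2A_{i,i}A_{i,j}$, giving $\epsilon^{ij}(\chi_i+\chi_j)-2K=2(s_i\epsilon^{ij}_{,\eta^i}+s_j\epsilon^{ij}_{,\eta^j}+\sum_{k\ne i,j}\epsilon^{ik}\epsilon^{jk}\varphi_k)$. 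All other monomial coefficients sit in strictly lower degree in the $r$-variables and are unaffected by the $K$-term, so they still vanish as in the proof of Theorem~\ref{main}; this establishes necessity.

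For sufficiency I would, as in Theorem~\ref{main}, verify by direct computation that the quadratic $g_i$ together with the two $K$-shifted conditions forces the vanishing of $R^i_{jkl}-K(\delta^i_k g_{jl}-\delta^i_l g_{jk})$ for every index combination, splitting into the case $(\log G)_{\mu\eta}\ne 0$ (where $\varphi_i=\psi_i=0$) and the separable case $(\log G)_{\mu\eta}=0$. The main obstacle is precisely this global check: whereas necessity only exploits the single component $R^{r^i}_{r^ir^i\eta^i}$, sufficiency requires confirming that the new constant-curvature right-hand side is matched by all remaining mixed-index components, and in particular that the bookkeeping of the extra term $2K(\sum_k A_{i,k})^2$ contributes exactly the two stated constants and generates no spurious conditions. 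Since the cofactor structure and the degree count in the $r$-variables coincide with the flat case, I expect this verification to go through with only the two scalar constraints modified as above.
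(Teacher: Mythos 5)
Your proposal is correct in substance and follows precisely the route the paper intends: the paper states Theorem~\ref{nonloc} without any proof, as the constant-curvature analogue of Theorem~\ref{main}, and your reconstruction supplies exactly the implied argument --- Tsarev's conditions and the ansatz (\ref{nm}) are unchanged for operators of type (\ref{MF}), the diagnostic component is now set equal to $K g_{r^i\eta^i}=K/n_i$, the extra term $2K\bigl(\sum_k A_{i,k}\bigr)^2$ is independent of $r^i$ (so quadraticity of $g_i$ survives), and its top-degree contributions $2K(A_{i,i})^2$ and $4K A_{i,i}A_{i,j}$ produce precisely the two $K$-shifted constraints; this is confirmed by the paper's own KdV example with $s_i=-\frac{K}{2}(\eta^i)^3$, $\chi_i=K(\eta^i)^2$, $\psi_i=-K$. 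Two localized slips should be fixed, though neither affects the necessity argument. First, your claim that all other monomial coefficients are \emph{unaffected} by the $K$-term is false: $2K\bigl(\sum_k A_{i,k}\bigr)^2$ also shifts lower-degree coefficients (e.g.\ for $n=2$ it contributes $2K\epsilon^2$ to the constant term); this is harmless for necessity, which uses only the two extracted coefficient families, but it must be accounted for in the sufficiency check rather than dismissed. Second, in your sufficiency sketch the Case~1 specialization is wrong: differentiating the $K$-shifted analogue of (\ref{eqref2}) still yields $\varphi_k(\log\epsilon^{ik})_{\eta^i\eta^k}=0$, hence $\varphi_i=0$ when $(\log G)_{\mu\eta}\neq 0$, but then $\psi_i=-K$, \emph{not} $\psi_i=0$ --- setting $\psi_i=0$ would return you to the flat metric of Theorem~\ref{main}, and the verification of constant curvature $K$ would fail (again compare the KdV example, where $\psi_i=-K$ appears explicitly).
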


\medskip
Local Hamiltonian formulation (\ref{DN}) of Cases 1 and 2 appearing in the proof of Theorem \ref{main} is discussed in Sections \ref{sec:1} and \ref{sec:2} below. In Case 1, we have finitely many local Hamiltonian structures, while  Case 2 is multi-Hamiltonian with infinitely many compatible local Hamiltonian structures parametrised by arbitrary functions.

\subsection{Local Hamiltonian formulation of the case $(\log G)_{\mu \eta}\ne 0$}
\label{sec:1}

In this case, the metric components $n_i, m_i$ are given by formula (\ref{nm}) where $g_i(r^i, \eta^i)=\chi_i(\eta^i)r^i$. The functions $s_i(\eta^i)$ and $\chi_i(\eta^i)$ are to be recovered from the relations
\begin{equation*}\label{5}
2s_i(\eta^i)\epsilon^{ij}_{,\eta^i}+2s_j(\eta^j)\epsilon^{ij}_{,\eta^j}=\epsilon^{ij} \, (\chi_i(\eta^i)+\chi_j(\eta^j)).
\end{equation*} 
Below we specify the functions $s_i(\eta^i)$ and $\chi_i(\eta^i)$ for all standard cases of the soliton gas equations as listed in Table 1 (excluding the hard-rod case which satisfies $(\log G)_{\mu \eta}= 0$ and is discussed in Section \ref{sec:2}). It turns out that the cases $n=2$ (Table 2) and $n\geq 3$ (Table 3) are different: for $n=2$, we have at least two compatible local Hamiltonian structures, whereas for $n\geq 3$ only one of them survives.

 \begin{center}
 \footnotesize{Table 2: Local Hamiltonian structures for $n=2$}
 
 \vspace{4mm}
 
\begin{tabular}{|l|l|l|}
 \hline
kinetic equation & $s_1(\eta^1),\ s_2(\eta^2)$  & $\chi_1(\eta^1),\ \chi_2(\eta^2)$ \\
 \hline
KdV soliton gas\vphantom{$\frac{\frac{A}A}{\frac{A}A}$} &
$s_1=c_1\eta^1$ & 
$\chi_1=-2c_1+c_2$\\
&
$s_2=c_1 \eta^2$ & 
$ \chi_2=-2c_1-c_2$\\

 \hline
sinh-Gordon soliton gas\vphantom{$\frac{\frac{A}A}{\frac{A}A}$} &
$s_1=c_1$ & 
$\chi_1=-2c_1\tanh \eta^1+c_2$\\
&
$s_2=c_1 $ & 
$ \chi_2=-2c_1\tanh \eta^2-c_2$\\

\hline

Lieb-Liniger gas\vphantom{$\frac{\frac{A}A}{\frac{A}A}$} &
$s_1=c_1$ & 
$\chi_1=c_2$\\
&
$s_2=c_1$ & 
$ \chi_2=-c_2$\\
\hline

DNLS soliton gas\vphantom{$\frac{\frac{A}A}{\frac{A}A}$} &
$s_1=c_1(1-(\eta^1)^2)$ & 
$\chi_1=2c_1\eta^1+c_2$\\
&
$s_2=c_1 (1-(\eta^2)^2)$ & 
$ \chi_2=2c_1\eta^2-c_2$\\

\hline
separable case\vphantom{$\frac{\frac{A}A}{\frac{A}A}$} 
& $s_1=\frac{c_2\phi^2(\eta^1)+2c_1\phi(\eta^1)+c_4}{2\phi'(\eta^1)} $ &  
$\chi_1=c_1+c_3+c_2\phi(\eta^1)$ \\
& $s_2= \frac{-c_2\phi^2(\eta^2)+2c_1\phi(\eta^2)-c_4}{2\phi'(\eta^2)} $ &  
 $\chi_2=c_1-c_3-c_2\phi(\eta^2)$ \\
\hline

general case\vphantom{$\frac{\frac{A}A}{\frac{A}A}$} &
$s_1=\frac{c_1}{a'(\eta^1)}$ & 
$\chi_1=\frac{2c_1\phi'(\eta^1)}{a'(\eta^1)\phi(\eta^1)}+c_2$\\
&
$s_2=\frac{c_1}{a'(\eta^2)}$ & 
$ \chi_2=\frac{2c_1\phi'(\eta^2)}{a'(\eta^2)\phi(\eta^2)}-c_2$\\

 \hline

\end{tabular}
 \end{center}
In Table 2, $c_i$ are arbitrary constants (responsible for the number of different Hamiltonian structures). The corresponding Hamiltonian densities are of the form
$$
h=u^1 h_1(\eta^1)+u^2h_2(\eta^2)
$$
where the variables $u^i$ are given by formula (\ref{vp}) and the explicit form of the functions $h_i(\eta^i)$ can be found in \cite{VerFer23}.

For $n\geq 3$, only one local Hamiltonian structure survives, namely, the one that corresponds to the constant $c_1$ (in what follows, we set $c_1=1$). 
The corresponding Hamiltonian densities are of the form $h=\sum_{i=1}^n u^i h_i(\eta^i)$ where the functions $h_i$ are given by
\begin{equation}\label{hd}
h_i(\eta^i)=\displaystyle e^{\int{\frac{\chi_i(\eta^i)}{2s_i(\eta^i)}\, d\eta^i}}\int{\frac{\xi^i(\eta^i)e^{-\int{\frac{\chi_i(\eta^i)}{2s_i(\eta^i)}}\, d\eta^i}}{s_i(\eta^i)}\, d\eta^i}.
\end{equation}
Similarly, the densities of momenta are of the form $g=\sum_{i=1}^n u^i g_i(\eta^i)$ where the functions $g_i$ are given by
\begin{equation}\label{gd}
g_i(\eta^i)=\displaystyle e^{\int{\frac{\chi_i(\eta^i)}{2s_i(\eta^i)}\, d\eta^i}}\int{\frac{e^{-\int{\frac{\chi_i(\eta^i)}{2s_i(\eta^i)}}\, d\eta^i}}{s_i(\eta^i)}\, d\eta^i}.
\end{equation}
The results are summarised in Table 3 below where the last two columns give the functions $h_i(\eta^i)$ and $g_i(\eta^i)$.

 \begin{center}
 \footnotesize{Table 3: Local Hamiltonian structures for $n\geq 3$}
 
 \vspace{4mm}
 
\begin{tabular}{|l|l|l|l|l|}
 \hline
kinetic equation & $s_i(\eta^i)$  & $\chi_i(\eta^i)$ &$h_i(\eta^i)$&$g_i(\eta^i)$\\
 \hline
KdV soliton gas\vphantom{$\frac{\frac{A}A}{\frac{A}A}$} &
$\eta^i$ & 
$-2$ & $-\frac{4}{3}(\eta^i)^2$&$1$\\

 \hline
sinh-Gordon soliton gas\vphantom{$\frac{\frac{A}A}{\frac{A}A}$} &
$1$ & 
$-2\tanh \eta^i$ & $-1$&{$\tanh{\eta^i}$}\\

\hline

Lieb-Liniger gas\vphantom{$\frac{\frac{A}A}{\frac{A}A}$} &
$1$ & 
$0$ & $-\frac{1}{2}(\eta^i)^2$&$\eta^i$\\

\hline

DNLS soliton gas\vphantom{$\frac{\frac{A}A}{\frac{A}A}$} &
$1-(\eta^i)^2$ & 
$2\eta^i$ & $1$&$\frac{\sin^{-1}{\eta^i}}{\sqrt{1-(\eta^i)^2}}$\\

\hline
separable case\vphantom{$\frac{\frac{A}A}{\frac{A}A}$} 
& $\frac{\phi(\eta^i)}{\phi'(\eta^i)} $ &  
$1$ & $-\sqrt{\phi(\eta^i)}\int^{\eta^i}\frac{\phi'(\eta)S(\eta)}{\phi(\eta)^{3/2}}\ d\eta $&$-2$\\

\hline

general case\vphantom{$\frac{\frac{A}A}{\frac{A}A}$} &
$\frac{1}{a'(\eta^i)}$ & 
$\frac{2\phi'(\eta^i)}{a'(\eta^i)\phi(\eta^i)}$ & $-\phi(\eta^i)\int^{\eta^i}{\frac{S(\eta)a'(\eta)}{\phi(\eta)}\, d\eta}$&{$\phi(\eta^i)\int^{\eta^i}{\frac{a'(\eta)}{\phi(\eta)}\, d\eta}$}\\

 \hline

\end{tabular}
 \end{center}

Although systems from Table 3 possess only one local Hamiltonian structure, all of them are multi-Hamiltonian, with compatible (nonlocal) Hamiltonian structures
of type (\ref{MF}), or of a more general nonlocal type studied in \cite{F}. Thus, delta-functional reduction of the KdV soliton gas possesses Hamiltonian operator (\ref{MF}) associated with the constant curvature {$K$} metric (\ref{nm}) with 
$$
s_i(\eta^i)={-}\frac{K}{2}(\eta^i)^3,  \qquad g_i(r^i, \eta^i)=K(\eta^i)^2r^i{-}K;
$$
the corresponding Hamiltonian density is $h=\frac{8}{K} \sum_{i=1}^n u^i$.

\subsection{Local Hamiltonian formulation of the case $(\log G)_{\mu \eta}= 0$}
\label{sec:2}

With $\epsilon^{ij}(\eta^i,\eta^j)=\phi_i(\eta^i)\phi_j(\eta^j)$, equation (\ref{eqref2}) simplifies to   
\begin{equation*}
\sum_{k\neq i} \varphi_{k}(\phi_k)^2+\frac{\psi_i}{(\phi_i)^2}=0.
\end{equation*}
Separating the variables we obtain
\begin{equation*}
\varphi_i=\frac{c_i}{\phi_i^2}, \quad \psi_i={-}\displaystyle \sum_{k\neq i}c_k\, \phi_i^2,
\end{equation*}
where $c_i$ are arbitrary constants. With this form of $\varphi_i$ and $\psi_i$, equation (\ref{eqdr}) simplifies to
\begin{equation}\label{chi}
\left(\chi_i-2s_i\frac{\phi_i'}{\phi_i}\right)+\left(\chi_j-2s_j\frac{\phi_j'}{\phi_j}\right)=2\displaystyle \sum_{k\neq i,j}c_k.
\end{equation}
For $n\geq 3$,  relations  (\ref{chi}) imply
\begin{equation*}
\chi_i=2\frac{s_i\phi_i'}{\phi_i}+\left(\displaystyle \sum_{k\neq i}c_k-c_i\right),
\end{equation*}
ultimately,
\begin{equation*}\label{gi}
g_i(r^i,\eta^i)=\frac{c_i}{\phi_i^2}(r^i)^2+\left(2\frac{s_i(\eta^i)\phi'_i}{\phi_i}+ \displaystyle \sum_{k\neq i}c_k - c_i \right)r^i-\displaystyle \sum_{k\neq i}c_k\phi_i^2.
\end{equation*}
Note that in the special case $n=2$, we have only one relation (\ref{chi}) which gives
$$
\chi_1=2\left(\frac{s_1\phi_1'}{\phi_1}-c\right), \quad \chi_2=2\left(\frac{s_2\phi_2'}{\phi_2}+c\right),
$$
where $c$ is yet another arbitrary constant. The corresponding metrics  are all proved to be flat, thus, we have an infinity of local Hamiltonian structures parametrised by $n$ arbitrary functions $s_i(\eta^i)$ and $n$ arbitrary constants $c_i$ (with an extra constant $c$ appearing in the special case $n=2$). Below we present some explicit formulae for $n=2$.

\bigskip

\noindent{\bf General  case, $n=2$}: $\epsilon(\eta^1, \eta^2)=\phi_1(\eta^1)\phi_2(\eta^2)$. In this case
 the functions $g_1, g_2$ specialise as follows:
\begin{equation*}\label{gg}
\begin{array}{c}
\displaystyle g_1=\,{\frac {{c_1}\,{{(r^1)}}^{2}}{\phi_1^{2}}}+2\left(\frac{\phi_1'}{\phi_1}s_1-c \right){r^1} -{c_2\phi_1^{2}}, \\
\ \\
\displaystyle g_2=\,{\frac {{c_2}\,{{(r^2)}}^{2}}{\phi_2^{2}}}+2\left(\frac{\phi_2'}{\phi_2}s_2+c \right){r^2} -{c_1\phi_2^{2}};
\end{array}
\end{equation*} 
here $s_1(\eta^1),s_2(\eta^2)$ are arbitrary functions  and $c_1,c_2,c$ are arbitrary constants.
For the  particular choice $c_1=c_2=0, \ c=1$, $s_1=\phi_1/\phi_1', \ s_2=-\phi_2/\phi_2'$, we have $g_1=g_2=0$ and the coefficients of the corresponding contravariant metric  simplify to
\begin{equation*}
n_1=  \frac{\phi_1}{\phi_1'}\frac {\left({r^1r^2}-\epsilon^2 \right)^{2}}{ \left( r^2-\epsilon \right) ^{2}},\quad m_1= -2\frac{\epsilon({r^1}-\epsilon)}{(r^2-\epsilon) ^{3}}({r^1r^2}-\epsilon^2)^{2},
\end{equation*}
\begin{equation*}
n_2=-\frac{\phi_2}{\phi_2'}\frac {\left({r^1r^2}-\epsilon^2 \right)^{2}}{ \left( r^1-\epsilon \right) ^{2}}, \quad m_2=  2\frac{\epsilon({r^2}-\epsilon)}{(r^1-\epsilon) ^{3}}({r^1r^2}-\epsilon^2)^{2}.
\end{equation*}
The corresponding Hamiltonian density is given by the formula
\begin{equation*}
h=\frac{(r^1-\epsilon)\displaystyle \int{\frac{\phi_2'\,  \xi^2}{\phi_2}	\, d\eta^2}-(r^2-\epsilon)\displaystyle \int{\,\frac{\phi_1'\, \xi^1}{\varphi_1} d\eta^1}}{r^1r^2-\epsilon^2}.
\end{equation*}

\bigskip

\noindent {\bf Hard-rod gas, $n=2$}: $\epsilon(\eta^1, \eta^2)=-a=const$. In this case, the  functions $g_1, g_2$  can be scaled as follows:
$$
g_1=c_1\, (r^1)^2- 2cr^1- c_2\, a^2, \qquad
g_2=c_2\, (r^2)^2+2cr^2-c_1\, a^2; 
$$
here $s_1(\eta^1),\, s_2(\eta^2)$ are arbitrary functions and $c_1,c_2,c$ are arbitrary constants. 
The coefficients of the corresponding contravariant metric  take the form
\begin{equation*}
n_1=\frac{s_1\, (a^2-r^1r^2)^2}{(a+r^2)^2}, \quad m_1=\frac{\left(c_1\, (r^1)^2-2cr^1-c_2\, a^2\right)\, (a^2-r^1r^2)^2}{(a+r^2)^2},
\end{equation*}
\begin{equation*}
n_2=\frac{s_2\, (a^2-r^1r^2)^2}{(a+r^1)^2}, \quad m_2=\frac{\left(c_2\, (r^2)^2+2cr^2-c_1\, a^2\right)\, (a^2-r^1r^2)^2}{(a+r^1)^2}.
\end{equation*}
Based on the general form of  conservation laws from  \cite{FP}, we obtain the corresponding Hamiltonian density:
\begin{equation*}
h=\frac{(a+r^2)\psi^1+(a+r^1)\psi^2}{r^1r^2-a^2}+\sigma^1(\eta^1)+\sigma^2(\eta^2)
\end{equation*}
where 
$$
\psi^1=c_2k\left(ac_1s_2^2\sigma_2''-cs_1^2\sigma_1''-s_1\left(s_1'c-c_1c_2a^2-{c}^2\right)\sigma_1'+c_1\left(as_2s_2'\sigma_2'+a\xi^2c_2-\xi^1c\right)\right),
$$
$$
\psi^2=c_1k\left(ac_2s_1^2\sigma''_1+cs_2^2\sigma_2''+s_2\left(s_2'c+c_1c_2a^2+c^2\right)\sigma'_2+c_2\left(as_1s_1'\sigma_1'+a\xi^1c_1+\xi^2c\right)\right);
$$
here 
 $$
k=\frac{1}{c_1c_2(c_1c_2a^2+c^2)},
$$
and   $\sigma_1(\eta^1), \sigma_2(\eta^2)$ are solutions of the following ODEs:
\begin{equation*}
(s_1^2)\, \sigma_1'''+(3s_1's_1)\, \sigma_1''+\left(-c_1c_2a^2-c^2+(s_1')^2+s_1s_1''\right)\, \sigma_1'+c_1(\xi^1)'=0,
\end{equation*}
\begin{equation*}
(s_2^2)\, \sigma_2'''+(3s_2's_2)\, \sigma_2''+\left(-c_1c_2a^2-c^2+(s_2')^2+s_2s_2''\right)\, \sigma_2'+c_2(\xi^2)'=0.
\end{equation*}
\medskip
In the case $n\geq 3$ we have
\begin{equation}\label{nmhr}
\begin{array}{c}
n_i=\frac{s_i(\eta^i)}{\prod_{k\ne i}(a+r^k)^2}(\det{\hat{\epsilon}})^2=\frac{s_i(\eta^i)}{(u^i)^2},\\
\ \\
m_i=\frac{\left({c}_ir^i-a\sum_{k\ne i}c_k\right)(a+r^i)}{\prod_{k\ne i}(a+r^k)^2}(\det{\hat{\epsilon}})^2=\frac{\left({c}_ir^i-a\sum_{k\ne i}c_k\right)(a+r^i)}{(u^i)^2},
\end{array}
\end{equation}
where $s_i(\eta^i)$ are arbitrary functions and $c_i$ are arbitrary constants. The special choice $s_i(\eta^i)=1, \ c_i=0$ leads to the flat metric with $n_i=\frac{1}{(u^i)^2}, \ m_i=0$, the corresponding Hamiltonian density is $h=-\frac{1}{2}\sum u^i (\eta^i)^2$. 
We refer to Section \ref{sec:hr} for the explicit form of the limit of this Hamiltonian structure  obtained as $n\to \infty$. 

\medskip

Note that the presence of infinitely many  compatible local Hamiltonian structures is the sign of linearisability of the corresponding system. In Section \ref{sec:hr} we will see that this is indeed the case: the kinetic equation for hard-rod gas is linearisable.

\subsection{Casimirs and momenta}
\label{sec:flat}

As demonstrated in \cite{FP}, all conserved densities of system (\ref{uv}) are given by the formula
\begin{equation*}\label{cn1}
\sum_{i=1}^nu^i\psi^i(\eta)+\sum_{i=1}^n\sigma^i(\eta^i)
\end{equation*}
where $\sigma^i(\eta^i)$ are arbitrary functions of their arguments and the functions $\psi^i(\eta^1, \dots, \eta^n)$ satisfy the equations
$\psi^i_{,\eta^j}=(\sigma^j)'\epsilon^{ij}, \ j\ne i$. The general conserved density depends on $2n$ arbitrary functions of one variable: $n$ functions $\sigma^i(\eta^i)$, plus extra $n$  functions coming from  $\psi^i$.

In this section we calculate flat coordinates (Casimirs) and momenta for Hamiltonian structures from Table 3. Our first observation is that the first $n$ Casimirs (out of the total number of $2n$) are given by the simple formula
\begin{equation}\label{for}
u^i\psi^i(\eta^i) \quad {\rm where} \quad \psi^i(\eta^i)=\displaystyle e^{\int{\frac{\chi_i(\eta^i)}{2s_i(\eta^i)}\, d\eta^i}},
\end{equation}
no summation, where the functions $s_i$ and $\chi_i$ are the same as in Table 3. The remaining $n$ Casimirs, which are more complicated, are listed below on a case-by-case basis. We will restrict to the case $n=2$. 

\vspace{3mm}

\noindent \textbf{KdV case},   $n=2$. The first two Casimirs, as given by (\ref{for}), have the form
$$
\frac{u^1}{\eta^1}, \quad \frac{u^2}{\eta^2}.
$$
The next two Casimirs are more complicated:
\begin{align*}
u^1\ \frac{\ln s \ln (s+1)+ \text{dilog}\, s+ \text{dilog}\, (s+1)}{\eta^1}-\eta^2,\\
u^2\ \frac{\ln s \ln (s+1)+ \text{dilog}\, s+ \text{dilog}\, (s+1)}{\eta^2}+\eta^1,
\end{align*}
here $s=\frac{\eta^1-\eta^2}{\eta^1 +\eta^2}$ and $\text{dilog}$ is the dilogarithm function, 
\begin{equation*}
\text{dilog}(x)=\text{Li}_2(1-x)=\displaystyle \int_1^x{\frac{\log t}{1-t}dt}.
\end{equation*}
\medskip
\noindent The density of momentum, as given by formula (\ref{gd}), has the form  
$$
u^1+u^2.
$$

\vspace{3mm}

\noindent \textbf{Sinh-Gordon case}, $n=2$. The first two Casimirs have the form
$$
\frac{u^1}{\cosh{\eta^1}}, \quad \frac{u^2}{\cosh{\eta^2}}.
$$
The next two Casimirs are more complicated:
$$
u^1\frac{a^2}{4\sinh{(\eta^1-\eta^2)}\cosh{\eta^1}}+\sinh{\eta^2}, \quad u^2\frac{a^2}{4\sinh{(\eta^2-\eta^1)}\cosh\eta^2}+\sinh{\eta^1}.
$$
\medskip
\noindent The density of momentum, as given by formula (\ref{gd}), has the form  $$u^1\tanh\eta^1+u^2\tanh\eta^2.$$

\vspace{3mm}

\noindent \textbf{Lieb-Liniger case},   $n=2$.  The first two Casimirs have the form
$$
u^1, \quad u^2.
$$
The next two Casimirs are more complicated:
\begin{align*}
2u^1 \tan^{-1}{\left(\frac{\eta^2-\eta^1}{a}\right)}+\eta^2, \quad 
2u^2 \tan^{-1}{\left(\frac{\eta^1-\eta^2}{a}\right)}+\eta^1.
\end{align*}
\medskip
\noindent The density of momentum, as given by formula (\ref{gd}), has the form  $$u^1\eta^1+u^2\eta^2.$$

\vspace{3mm}

\noindent \textbf{DNLS case}, $n=2$. The first two Casimirs have the form
$$ \frac{u^1}{\sqrt{1-(\eta^1)^2}}, \quad \frac{u^2}{\sqrt{1-(\eta^2)^2}}. $$
The next two Casimirs are more complicated:
\newpage 
\begin{align*}
&\frac{u^1}{\sqrt{(\eta^1)^2-1}}\Bigg[\text{cosh}^{-1}{\eta^2}+2\ln{(e^{x_1}-e^{-x_2})}\left(\text{tanh}^{-1}e^{x_1}-\text{tanh}^{-1}e^{-x_2}\right)\big.\\
&\hphantom{cioacioa}+2\ln{(e^{x_1}-e^{x_2})}\left(\text{tanh}^{-1}e^{x_1}-\text{tanh}^{-1}e^{x_2}\right)-\text{dilog}\left(\frac{e^{x_1}+1}{e^{x_2}+1}\right)\\
&\hphantom{cioaicoaciao}+\text{dilog}\left(\frac{e^{x_1}-1}{e^{x_2}-1}\right)+\text{dilog}\left(\frac{e^{x_1}+1}{e^{-x_2}+1}\right)-\text{dilog}\left(\frac{e^{x_1}-1}{e^{-x_2}-1}\right)\Bigg]-\text{cosh}^{-1}\eta^2,
\\
&\frac{u^2}{\sqrt{(\eta^2)^2-1}}\Bigg[\text{cosh}^{-1}{\eta^1}+2\ln{(e^{x_2}-e^{-x_1})}\left(\text{tanh}^{-1}e^{x_2}-\text{tanh}^{-1}e^{-x_1}\right)\big.\\
&\hphantom{cioacioa}+2\ln{(e^{x_2}-e^{x_1})}\left(\text{tanh}^{-1}e^{x_2}-\text{tanh}^{-1}e^{x_1}\right)-\text{dilog}\left(\frac{e^{x_2}+1}{e^{x_1}+1}\right)\\
&\hphantom{cioaicoaciao}+\text{dilog}\left(\frac{e^{x_2}-1}{e^{x_1}-1}\right)+\text{dilog}\left(\frac{e^{x_2}+1}{e^{-x_1}+1}\right)-\text{dilog}\left(\frac{e^{x_2}-1}{e^{-x_1}-1}\right)\Bigg]-\text{cosh}^{-1}\eta^1,
\end{align*}
where $x_i=\cosh\, \eta^i$.

\medskip
\noindent The density of momentum, as given by formula (\ref{gd}), has the form  
$$u^1\frac{\sin^{-1}{\eta^1}}{\sqrt{1-(\eta^1)^2}}+u^2\frac{\sin^{-1}{\eta^2}}{\sqrt{1-(\eta^2)^2}}.$$

\vspace{3mm}

\noindent \textbf{Separable case},   $n=2$. The first two Casimirs have the form
$$
{u^1}\sqrt{\phi(\eta^1)}, \quad {u^2}\sqrt{\phi(\eta^2)}.
$$
The next two Casimirs are more complicated:
\begin{align*}
u^1\, \frac{\phi(\eta^1)-\phi(\eta^2)}{\sqrt{\phi(\eta^2)}}+\frac{1}{\sqrt{\phi(\eta^2)}}, \qquad
u^2\, \frac{\phi(\eta^2)-\phi(\eta^1)}{\sqrt{\phi(\eta^1)}}+\frac{1}{\sqrt{\phi(\eta^1)}}.
\end{align*}
\medskip
\noindent The density of momentum, as given by formula (\ref{gd}), has the form  $$-2u^1-2u^2.$$

\vspace{3mm}

\noindent \textbf{General case}, $n=2$. The first two Casimirs have the form
$$
{u^1}{\phi(\eta^1)}, \quad {u^2}{\phi(\eta^2)}.
$$
The next two Casimirs are more complicated:
 $$
 u^1\phi(\eta^1){\cal G}[a(\eta^2)-a(\eta^1)]+\int{\frac{a'(\eta^2)}{\phi(\eta^2)}\, d\eta^2}, \qquad u^2\phi(\eta^2){\cal G}[a(\eta^1)-a(\eta^2)]+\int{\frac{a'(\eta^1)}{\phi(\eta^1)}\, d\eta^1},
 $$
where $\cal G$ is the antiderivative of $g$. 
\medskip

\noindent The density of momentum, as given by formula (\ref{gd}), has the form 
$$
u^1\phi(\eta^1)\int^{\eta^1}{\frac{a'(\eta)}{\phi(\eta)}\, d\eta}+u^2 \phi(\eta^2)\int^{\eta^2}{\frac{a'(\eta)}{\phi(\eta)}\, d\eta}.
$$
\medskip

\noindent{\bf Remark.} For all examples above, the last two Casimirs are of the form 
$$u^1\psi^1(\eta^1,\eta^2)+\sigma^2, \qquad u^2\psi^2(\eta^1,\eta^2)+\sigma^1,$$
where 
\begin{align*}
\psi^1_{,\eta^1}=\frac{-2\sigma_2'\, \epsilon\, s_2 + \psi^1\chi_1}{2s_1}, \qquad \psi^1_{,\eta^2}=\sigma_2'\, \epsilon,\\
\psi^2_{,\eta^1}=\sigma_1'\, \epsilon, \qquad \psi^2_{,\eta^2}=\frac{-2\sigma_1'\, \epsilon\, s_1 + \psi^2\chi_2}{2s_2}.
\end{align*}
The compatibility conditions of these relations lead to equations for $\sigma_1, \sigma_2$:
$$
\frac{{\sigma^1}''}{{\sigma^1}'}+\frac{s_1'}{s_1}+\frac{\chi_1}{2s_1}=0, \quad \frac{{\sigma^2}''}{{\sigma^2}'}+\frac{s_2'}{s_2}+\frac{\chi_2}{2s_2}=0.
$$
Solving these equations we complete the set of Casimirs for all of the above examples.


\section{Hamiltonian formulation of the full hard-rod kinetic equation}
\label{sec:hr}

In this section we construct local Hamiltonian formalism for the full hard-rod kinetic equation. Note that nonlocal Hamiltonian formalism in the general case was discussed in \cite{Bul}. For $G(\mu, \eta)=-a, \ S(\eta)=\eta$,  equation (\ref{gas}) simplifies to 
\begin{equation}\label{hr3}
\begin{array}{c}
f_t+(sf)_x=0,\\
\ \\
{\displaystyle s(\eta)=\eta-a \int f(\mu)[s(\mu)-s(\eta)]\ d\mu}.
\end{array}
\end{equation}

\subsection{Moment representation and linearisation}
\label{sec:mom}

Let us introduce the `moments', 
$$
A^i=\int \eta^i f(\eta)\, d\eta, \quad B^i=\int \eta^i s(\eta) f(\eta)\, d\eta,
$$
$i\in \{0, 1, 2, \dots \}$. Multiplying the first equation (\ref{hr3}) by $\eta^i$ and integrating over $\eta$ one obtains $A^i_t+B^i_x=0$. Similarly, multiplying the second equation (\ref{hr3}) by $\eta^i f(\eta)$ 
and integrating over $\eta$ one obtains the explicit form of $B$-moments in terms of $A$-moments, namely, $B^0=A^1, \ B^k=\frac{A^{k+1}-aA^1A^k}{1-aA^0}, \ k\geq 1$. In particular, the second equation gives $s(\eta)=\frac{\eta-aA^1}{1-aA^0}$, so that the kinetic equation reduces to
\begin{equation}\label{hardrod}
f_t+\left(\frac{\eta-aA^1}{1-aA^0}\, f\right)_x=0;
\end{equation}
in equivalent form, it has appeared in \cite{Perkus}, eq. (2) and \cite{Dobrushin}, eq. (1.1). In terms of the moments, equation (\ref{hardrod}) leads to an infinite hydrodynamic chain,
\begin{equation}\label{chain}
A^i_t+\left(\frac{A^{i+1}-aA^1A^i}{1-aA^0}\right)_x=0,
\end{equation}
$i\in \{0, 1, 2, \dots \}$, which has first appeared  in the classification of integrable Egorov hydrodynamic chains (\cite{P2004},  degeneration of formula (14)). Under delta-functional reduction (\ref{del}), the $A$-moments assume the form $A^i=\sum_{k=1}^n u^k(\eta^k)^i.$
The corresponding  system (\ref{uv}) can be viewed as a $2n$-component hydrodynamic reduction of chain (\ref{chain}); see, e.g., \cite{GT, P2003, FM} for the theory of integrable hydrodynamic chains and their reductions. 

Note that chain (\ref{chain}) linearises under the reciprocal transformation $(t, x) \to (t, y)$ where the new nonlocal independent variable $y$ is defined as
$dy=(1-aA^0)dx+aA^1dt$, followed by the change of moments, $\tilde A^i=\frac{A^i}{1-aA^0}$,  taking equations (\ref{chain}) to
\begin{equation}\label{chainl}
\tilde A^i_t+\tilde A^{i+1}_y=0.
\end{equation}
This allows to obtain a general solution of equation (\ref{hardrod}) as follows (compare with \cite{Perkus}). Applied directly to  equation (\ref{hardrod}), the above reciprocal transformation leads to the linear PDE
$$
\tilde f_t+\eta \tilde f_y=0,
$$
where $\tilde f=\frac{f}{1-aA^0}$. Thus, $\tilde f=\varphi(y-\eta t, \eta)$ where $\varphi$ is some function of the indicated arguments, so that
\begin{equation}\label{fex}
f=(1-aA^0)\varphi(y-\eta t, \eta).
\end{equation}
Integrating  relation (\ref{fex}) with respect to $\eta$ gives $A^0=(1-aA^0)\int \varphi(y-\eta t, \eta)\, d\eta$. Similarly, multiplying (\ref{fex}) by $\eta$ and integrating with respect to $\eta$ gives $A^1=(1-aA^0)\int \eta \, \varphi(y-\eta t, \eta)\, d\eta$. In particular,
$$
\frac{1}{1-aA^0}=1+a\int \varphi(y-\eta t, \eta)\, d\eta, \quad \frac{A^1}{1-aA^0}=\int \eta\, \varphi(y-\eta t, \eta)\, d\eta,
$$
so that the relation $dx=\frac{1}{1-aA^0}dy-\frac{aA^1}{1-aA^0}dt$ can be explicitly integrated for $x$:
$$
x=y+a \int \Phi(y-\eta t, \eta)\, d\eta
$$
where $\Phi$ is the antiderivative of $\varphi$ in the first argument.
Ultimately, the general solution of the kinetic equation for hard-rod gas can be represented in the following parametric form:
$$
f(\eta, y, t)=\frac{\varphi(y-\eta t, \eta)}{1+a\int \varphi(y-\eta t, \eta)\, d\eta},
$$
$$
s(\eta, y, t)=\eta+a\eta \int \varphi(y-\eta t, \eta)\, d\eta-a\int \eta \,  \varphi(y-\eta t, \eta)\, d\eta,
$$
$$
x=y+a \int \Phi(y-\eta t, \eta)\, d\eta,
$$
where $\varphi$ is an arbitrary function of its arguments and $\Phi$ is the antiderivative of $\varphi$ in the first argument.

\subsection{Hamiltonian structures}
\label{sec:Ham12}

Rewriting Hamiltonian structures (\ref{nmhr}) in terms of the moments $A^i$, in the limit $n\to \infty$ one obtains local Hamiltonian structures of the full kinetic equation, represented as hydrodynamic chain (\ref{chain}). As an example, let us take $n_i=1/ (u^i)^2, \ m_i=0$, in which case the metric $g$ (with upper indices) is represented by a symmetric matrix consisting of $n$  blocks of the form $\left(\begin{array}{cc}0&n_i\\ n_i&0\end{array}\right)$. The corresponding symmetric bivector is $\sum \frac{2}{(u^i)^2}\partial_{r^i}\partial_{\eta^i}$. Rewriting it in the variables $A^i$ we obtain a $2n\times 2n$ symmetric matrix which coincides with the upper left $2n\times 2n$ block of the infinite matrix 
\begin{equation}\label{infn}
\frac{J(U+U^T)J^T}{(1-aA^0)}
\end{equation}
where the infinite matrices $J$ and $U$ are defined as
$$
J=\left(\begin{array}{cccc}
(1-aA^0)&0&0&\dots\\
-aA^1&1&0 &\dots\\
-aA^2&0 & 1 &\dots\\
\dots&\dots&\dots&\dots
\end{array}
\right), \qquad U=\left(\begin{array}{cccc}
0&0&0&\dots\\
 A^0& A^1& A^2 &\dots\\
2A^1& 2 A^2 & 2 A^3 &\dots\\
\dots&\dots&\dots&\dots
\end{array}
\right).
$$
Note that $(1-aA^0)J$ is the Jacobian matrix of the change of variables $A\to \tilde A$ discussed in Section \ref{sec:mom}.
Thus, formula (\ref{infn}) gives a flat contravariant metric of  hydrodynamic chain (\ref{chain}). The exact form of the corresponding Hamiltonian operator is 
obtained in Example 1 below.

\medskip
Hamiltonian structures of hydrodynamic chain (\ref{chain}) can also be obtained from that of the linear chain (\ref{chainl}) by utilising reciprocal transformation $dy=(1-aA^0)dx+aA^1dt$ and the change of variables $\tilde A^i=\frac{A^i}{1-aA^0}$; see \cite{F, FP2003} for the behaviour of Hamiltonian structures under reciprocal transformations. We will present two examples of this kind.

\medskip

\noindent{\bf Example 1.} Introducing the column vectors $\tilde A=(\tilde A^0, \tilde A^1, \dots)^T, \ \frac{\partial \tilde h}{\partial \tilde A}=(\frac{\partial \tilde h}{\partial \tilde A^0}, \frac{\partial \tilde h}{\partial \tilde A^1}, \dots)^T$, one can represent the linear chain (\ref{chainl})  in Hamiltonian form,
$$
\tilde {A}_t=\tilde B\, \frac{\partial \tilde h}{\partial \tilde A},
$$
with the  Hamiltonian density $\tilde h=-\frac{1}{2}\tilde A^2$ and the Kupershmidt-Manin Hamiltonian operator 
$$
\tilde B=\tilde U\frac{d}{dy}+\frac{d}{dy}\tilde U^T, \qquad
\tilde U=\left(\begin{array}{cccc}
0&0&0&\dots\\
\tilde A^0&\tilde A^1&\tilde A^2 &\dots\\
2\tilde A^1& 2\tilde A^2 & 2\tilde A^3 &\dots\\
\dots&\dots&\dots&\dots
\end{array}
\right),
$$
which first appeared as a Hamiltonian structure of the Benney chain \cite{KupMan}. Applying the transformations indicated above we obtain a local Hamiltonian formulation of chain (\ref{chain}): 
$$
{A}_t= B\, \frac{\partial  h}{\partial  A},
$$
with the  Hamiltonian density $ h=-\frac{1}{2} A^2$ and the  Hamiltonian operator 
$$
B=J\left(\frac{1}{1-aA^0}U\frac{d}{dx}+\frac{d}{dx}U^T\frac{1}{1-aA^0}\right)J^T
+\frac{a}{1-aA^0}(PA^T_x-A_xP^T).
$$
Here $J$ and $U$ are the same as above and
 the column vector  $P$ is defined as $P=(0, A^0, 2A^1, 3A^2, \dots)^T$. Note that the contravariant metric of this operator coincides with (\ref{infn}).

\medskip

\noindent{\bf Example 2.} One can represent the linear chain (\ref{chainl})  in yet another Hamiltonian form,
$$
\tilde {A}_t=\tilde B\, \frac{\partial \tilde h}{\partial \tilde A},
$$
with the  Hamiltonian density $\tilde h=-\tilde A^1$ and the Kupershmidt Hamiltonian operator \cite{Kup},
$$
\tilde B=\tilde V\frac{d}{dy}+\frac{d}{dy}\tilde V^T, \qquad
\tilde V=\left(\begin{array}{cccc}
\tilde A^0&\tilde A^1&\tilde A^2&\dots\\
\tilde A^1&\tilde A^2&\tilde A^3 &\dots\\
\tilde A^2& \tilde A^3 & \tilde A^4 &\dots\\
\dots&\dots&\dots&\dots
\end{array}
\right).
$$
 Applying the transformations indicated above we obtain a nonlocal (constant curvature) Hamiltonian formulation of chain (\ref{chain}): 
$$
{A}_t= B\, \frac{\partial  h}{\partial  A},
$$
with the  Hamiltonian density $ h=-A^1$ and the nonlocal Hamiltonian operator 
$$
B=J\left(\frac{1}{1-aA^0}V\frac{d}{dx}+\frac{d}{dx}V^T\frac{1}{1-aA^0}\right)J^T
+2a\, (AA^T_x-A_xA^T)+2a\, A_x\left(\frac{d}{dx}\right)^{-1}A^T_x.
$$
Here $A$ and  $J$ denote the same as in Example 1 while
$$
V=\left(\begin{array}{cccc}
A^0&A^1&A^2&\dots\\
 A^1& A^2& A^3 &\dots\\
A^2&  A^3 & A^4 &\dots\\
\dots&\dots&\dots&\dots
\end{array}
\right).
$$

\subsection{Monge-Amp\`ere form of delta-functional reductions}

Here we establish a link of delta-functional reductions of the hard-rod kinetic equation to the higher-order Monge-Amp\`ere equations discussed in \cite{Fer02}. 
As we saw in Section \ref{sec:mom}, in terms of the moments $A^i$ the delta-functional reduction assumes the form $A^i=\sum_{k=1}^n u^k(\eta^k)^i.$ For definiteness, let us consider the case $n=2$ (the general case is similar). Explicitly, we have
\begin{align}\label{AA}\begin{split}
&A^0=u^1+u^2, \\
&A^1=u^1\eta^1+u^2\eta^2, \\
&A^2=u^1(\eta^1)^2+u^2(\eta^2)^2, \\
&A^3=u^1(\eta^1)^3+u^2(\eta^2)^3, \\
&A^4=u^1(\eta^1)^4+u^2(\eta^2)^4,
\end{split}\end{align}
equivalently, 
$$
A=\left(
\begin{array}{ccc}
A^0&A^1&A^2\\
 A^1& A^2& A^3 \\
A^2&  A^3 & A^4\\
\end{array}
\right) =u^1
\left(
\begin{array}{ccc}
1 & \eta^1 & (\eta^1)^2 \\
\eta^1 & (\eta^1)^2 & (\eta^1)^3 \\
(\eta^1)^2 & (\eta^1)^3 & (\eta^1)^4 
\end{array}
\right)+
u^2
\left(
\begin{array}{ccc}
1 & \eta^2 & (\eta^2)^2 \\
\eta^2 & (\eta^2)^2 & (\eta^2)^3 \\
(\eta^2)^2 & (\eta^2)^3 & (\eta^2)^4 
\end{array}
\right).
$$
As both matrices on the right-hand side have rank one, the Hankel matrix $A$ on the left has rank two, so that $\det A=0$,  providing explicit expression for $A^4$ in terms of $A^0, A^1, A^2, A^3$. Thus, chain (\ref{chain}) truncates to the first four equations:
\begin{equation*}
\begin{array}{c}
A^0_t+A^1_x=0, \quad A^1_t+\left(\frac{A^{2}-a(A^1)^2}{1-aA^0}\right)_x=0,\quad A^2_t+\left(\frac{A^{3}-aA^1A^2}{1-aA^0}\right)_x=0, \quad A^3_t+\left(\frac{A^{4}-aA^1A^3}{1-aA^0}\right)_x=0.
\end{array}
\end{equation*}
Under the reciprocal transformation $dy=(1-aA^0)dx+aA^1dt$, followed by the change of variables $\tilde A^i=\frac{A^i}{1-aA^0}$, this system simplifies to 
\begin{equation}\label{tA}
\begin{array}{c}
\tilde A^0_t+\tilde A^1_y=0, \quad \tilde A^1_t+\tilde A^2_y=0, \quad \tilde A^2_t+\tilde A^3_y=0, \quad \tilde A^3_t+\tilde A^4_y=0,
\end{array}
\end{equation}
where $\tilde A^4$ is expressed in terms of $\tilde A^0, \tilde A^1, \tilde A^2, \tilde A^3$ from the equation $\det \tilde A=0$ ($\tilde A$ is the same matrix as $A$ but with tilded entries). Finally, introducing a potential $f(t, y)$ such that
$$
\tilde A^0=f_{yyyy}, \ \tilde A^1=-f_{yyyt}, \ \tilde A^2=f_{yytt}, \ \tilde A^3=-f_{yttt}, \ \tilde A^4=f_{tttt}, 
$$
we reduce system (\ref{tA}) to a single fourth-order Monge-Amp\`ere equation for $f$,
$$
\det \tilde A=\det 
\left(
\begin{array}{ccc}
f_{yyyy} & -f_{yyyt} & f_{yytt} \\
-f_{yyyt} & f_{yytt} & -f_{yttt} \\
f_{yytt} & -f_{yttt} & f_{tttt} 
\end{array}
\right)=0.
$$
Note that all minuses can be removed by the change of sign of $y$; we refer to \cite{Fer02} for the properties of this equation and its generalisations. 

Let us mention that equation (\ref{AA}) has a simple geometric meaning: consider  projective space $\mathbb{P}^4$ with affine coordinates $A^0, \dots, A^4$, supplied with a rational normal curve $\gamma=(1, s, s^2, s^3, s^4)$. Then equations (\ref{AA}) parametrise bisecant variety of $\gamma$ (collection of all bisecant lines  of $\gamma$), which is a cubic hypersurface in $\mathbb{P}^4$ with the equation $\det A=0$.

\section{Concluding remarks}

Table 3 suggests that, in the limit $n\to \infty$, the local Hamiltonian structures of Table 3 should go to local Hamiltonian structures of the corresponding full kinetic equations. At the level of Hamiltonian densities and momenta,  this limit is easy to see. For example, the Hamiltonian density of delta-functional reduction (\ref{del}) of the KdV kinetic equation is given by
$h= -\frac{4}{3} \sum_{i=1}^n u^i(\eta^i)^2$,
which can be represented as 
$$
h=-\frac{4}{3}\int \eta^2f(\eta, x, t)\, d\eta=-\frac{4}{3}A^2
$$
where $A^2$ is one of the moments introduced in Section \ref{sec:hr}. Thus, as a Hamiltonian of the full KdV kinetic equation one should take
\begin{equation*}\label{Hfull}
H=-\frac{4}{3}\int \int \eta^2f(\eta, x, t)\, d\eta dx=-\frac{4}{3}\int A^2 dx.
\end{equation*}
Similarly, the density of momentum  is given by $g= \sum_{i=1}^n u^i$,
which can be represented as
\begin{equation*}\label{hfull}
g=\int f(\eta, x, t)\, d\eta=A^0.
\end{equation*}
Thus, as a momentum of the full KdV kinetic equation one should take
\begin{equation*}\label{Hfull}
G=\int \int f(\eta, x, t)\, d\eta dx=\int A^0 dx.
\end{equation*}
Unfortunately, it is not entirely clear how to write the metric of the Hamiltonian operator in terms of the moments $A^i$, as well as how to pass to the limit as $n\to \infty$.

\section{Acknowledgements}
We thank G. El and M. Pavlov for useful discussions.
PV's research was partially supported by GNFM of the Istituto Nazionale di Alta Matematica
(INdAM), the research project Mathematical Methods in Non-Linear Physics
(MMNLP)  by the Commissione Scientifica Nazionale -- Gruppo 4 -- Fisica Teorica
of the Istituto Nazionale di Fisica Nucleare (INFN) and by "Borse per viaggi all'estero" 
of the Istituto Nazionale di Alta Matematica, which permitted to visit the Geometry and 
Mathematical Physics group at Loughborough University.


\begin{thebibliography}{10}

\bibitem{Dobrushin} C. Boldrighini C,   Dobrushin R L  and Sukhov Yu M  1983 One-Dimensional Hard Rod Caricature of Hydrodynamics, {\it Journal of Statistical Physics} {\bf 31} no. 3. 


\bibitem{BKM} Bolsinov A V, Konyaev A Yu and  Matveev V S 
Nijenhuis geometry IV: conservation laws, symmetries and integration of certain non-diagonalisable systems of hydrodynamic type in quadratures, arXiv:2304.10626.

\bibitem{Bul} Bulchandani  V B 2017 On classical integrability of the hydrodynamics of quantum integrable systems,
{\it J. Phys. A}  {\bf 50}, no. 43, 435203, 30 pp.


\bibitem{CER}  Congy T,  El G  and  Roberti G 2021 Soliton gas in bidirectional dispersive hydrodynamics, {\it Phys. Rev. E} {\bf 103}   042201.


\bibitem{Do}  Doyon B,  Yoshimura T and  Caux J-S 2018 Soliton gases and generalized hydrodynamics,  {\it Phys. Rev. Lett.} {\bf 120}
 045301.

\bibitem{Do1}  Doyon B 2020 Lecture notes on Generalised Hydrodynamics, SciPost Phys. Lect. Notes {\bf 18}. 

\bibitem{Do2} Castro-Alvaredo O A, Doyon B, and  Yoshimura T 2016
Emergent Hydrodynamics in Integrable Quantum Systems Out of Equilibrium, {\it Phys. Rev.  X} {\bf  6}, 041065.


\bibitem{DN83}
  Dubrovin B A and  Novikov S P  1983
\newblock Hamiltonian formalism of one-dimensional systems of hydrodynamic type
  and the Bogolyubov--Whitham averaging method,
\newblock {\it Soviet Math. Dokl.}, {\bf 27} (3):  665-669.


\bibitem{DZP}  Dubrovin B A,  Zykov S A and Pavlov M V 2011  Weakly nonlinear Hamiltonian partial differential equations and a new class of solutions to the WDVV associativity equations,  {\it Funct. Anal. Appl.} {\bf 45}, no. 4  278-290.

\bibitem{EKPZ}  El G A,  Kamchatnov A M,  Pavlov M V and Zykov S A 2011  Kinetic equation for a soliton gas and its hydrodynamic reductions, {\it J. Nonlinear Sci.} {\bf 21}, no. 2 151-191.

\bibitem{El}  El G A 2003  The thermodynamic limit of the Whitham equations, {\it Phys. Lett. A} {\bf 311}, no. 4-5  374-383.

\bibitem{EK}  El G A  and  Kamchatnov A M 2005, Kinetic equation for a dense soliton gas, {\it Phys. Rev. Lett.} {\bf 95}  204101.

\bibitem{F} Ferapontov E V 1995 Nonlocal Hamiltonian operators of hydrodynamic type: differential geometry and applications,
{\it Amer. Math. Soc. Transl.} Ser. 2, {\bf 170}
Adv. Math. Sci., 27 American Mathematical Society, Providence, RI,  33-58.

\bibitem{MF}  Mokhov O I  and Ferapontov E V 1990 Nonlocal Hamiltonian operators of hydrodynamic type that are connected with metrics of constant curvature,
{\it Russian Math. Surveys} {\bf 45}, no. 3, 218-219.


\bibitem{FP2003}  Ferapontov E V and  Pavlov M V 2003 Reciprocal transformations of Hamiltonian 
operators of hydrodynamic type: non-local Hamiltonian formalism for linearly degenerate systems,
 {\it J. Math. Phys.} {\bf 44}, no. 3 1150-1172.

\bibitem{FP}  Ferapontov E V and Pavlov M V 2022 Kinetic equation for soliton gas: integrable reductions,  {\it J. Nonlinear Sci. }{\bf 32}  no. 2, Paper No. 26, 22 pp.

\bibitem{Fer} Ferapontov E V  1991 Integration of weakly nonlinear hydrodynamic
systems in Riemann invariants, {\it Phys. Lett. A} {\bf 158}  112-118. 

\bibitem{FM}   Ferapontov E V and  Marshall G M 2007 Differential-geometric approach to the integrability of hydrodynamic chains: the Haantjes tensor,  {\it Math. Ann.} {\bf 339}, no. 1  61-99. 

\bibitem{Fer02}  Ferapontov E V 2002 Decomposition of higher order equations of Monge-Amp\`ere type, {\it Lett. Math. Phys.} {\bf 62} 193-198.

\bibitem{Gib} Gibbons J 1981 Collisionless Boltzmann equations and integrable moment equations, {\it Phys. D} {\bf 3}, no. 3 503-511.


\bibitem{GibRai} Gibbons J and Raimondo A 2007 Differential geometry of hydrodynamic Vlasov equations,
{\it J. Geom. Phys.} {\bf 57},  no. 9 1815-1828.

\bibitem{GT} J. Gibbons J and  Tsarev S P 1996 Reductions of the
Benney equations, {\it Phys. Lett. A}  {\bf 211} 19-24.


\bibitem{KK} Kodama Yu and Konopelchenko B G 2016  Confluence of hypergeometric functions and integrable hydrodynamic-type systems, {\it Theor. Math. Phys.} {\bf 188} 1334--57.





\bibitem{KO2}  Konopelchenko B G and Ortenzi G 2018  Parabolic regularization of the gradient catastrophes for the Burgers-Hopf equation and Jordan chain,  {\it J. Phys. A Math. Theor.} {\bf 51} 275201.


\bibitem{KupMan} Kupershmidt B A and Manin Yu I 1978 Long wave equations with a free surface. II. The Hamiltonian structure and the
higher equations, {\it Funktsional. Anal. i Prilozhen.} {\bf 12}, no.1, 25-37.


\bibitem{Kup} Kupershmidt B A 1983 Deformations of integrable systems, {\it Proc. R. Ir. Acad. Sect. A} {\bf 83} 45-74.

     


\bibitem{LP}   Lorenzoni P and Perletti S 2022 Regular non-semisimple Dubrovin-Frobenius manifolds, {\it J. Math. Phys.} {\bf 63}, no. 10, 102301.

\bibitem{PTE}   Pavlov M V,  Taranov V B and El G A 2012  Generalized hydrodynamic reductions of the kinetic equation for a soliton gas, {\it Theoret. and Math. Phys.} {\bf 171}, no. 2  675-682.


\bibitem{Pavlov1} Pavlov M V 2018  Integrability of exceptional hydrodynamic-type systems, {\it Proc. Steklov Inst. Math.} {\bf 302}   325-35;  {\it Tr. Mat. Inst. Steklova} {\bf 302} Topologiya i Fizika 343-53.


\bibitem{P2003} Pavlov M V 2003 Integrable hydrodynamic chains,  {\it J. Math. Phys.}  {\bf 44}, no. 9  4134-4156.

\bibitem{P2004} Pavlov M V  2004 Classification of integrable Egorov hydrodynamic chains,
{\it Theoret. and Math. Phys.} {\bf  138},  no. 1, 45-58.

\bibitem{Perkus} Perkus J K 1969 Exact solution of kinetics of a model classical fluid, {\it The Physics of Fluids} {\bf 12}, no. 8.



\bibitem{Spohn} Spohn H 2023 Hydrodynamic Scales of Integrable Many-Particle Systems, arXiv:2301.08504.


\bibitem{Tsarev} Tsarev S P 1985 Poisson brackets and one-dimensional
Hamiltonian systems of hydrodynamic type, {\it Soviet Math. Dokl.}
\textbf{31}  488-491. 

\bibitem{Tsarev1}  Tsarev S P 1991 The geometry of
Hamiltonian systems of hydrodynamic type. The generalized hodograph method, {\it Math. USSR
Izvestiya} \textbf{37}  397-419.

\bibitem{VerFer23} Vergallo P and Ferapontov E V 2023 Hamiltonian systems of Jordan block type: delta-functional
reductions of the kinetic equation for soliton gas, {\it J. Math. Phys.} {\bf 64}, 103505  (2023); doi: 10.1063/5.0142689.

\bibitem{XF} Xue Lingling and Ferapontov E V 2020 Quasilinear systems of Jordan block type and the mKP hierarchy,  {\it J. Phys. A: Math. Theor.}  {\bf 53}  205202 (14pp).

\bibitem{Z}  Zakharov V E 1971 Kinetic equation for solitons, {\it Sov. Phys. JETP} {\bf 33}   538-541.


\end{thebibliography}
\end{document}